\DeclareMathOperator*{\argmin}{argmin} 
\newtheoremstyle{mystyle}
  {}
  {}
  {}
  {}
  {\bfseries}
  {.}
  { }
  {\thmname{#1}\thmnumber{ #2}\thmnote{ (#3)}}
\theoremstyle{mystyle}
\newtheorem{theorem}{Theorem}
\newtheorem{remark}{Remark}
\begin{document}
\title{\LARGE \bf Safety-Critical Control using Optimal-decay Control Barrier Function with Guaranteed Point-wise Feasibility}
\author{Jun Zeng, Bike Zhang, Zhongyu Li and Koushil Sreenath
    \thanks{Jun Zeng, Bike Zhang, Zhongyu Li and Koushil Sreenath are with the Department of Mechanical Engineering, University of California, Berkeley, CA, 94720, USA, \tt\small \{zengjunsjtu, bikezhang, zhongyu\_li, koushils\}@berkeley.edu}
    \thanks{This work was partially supported through National Science Foundation Grant CMMI-1931853.}
    \thanks{Code and simulations are available at \url{https://github.com/HybridRobotics/CBF-Pointwise-Feasibility}.}
}
\maketitle

\begin{abstract}
Safety is one of the fundamental problems in robotics.
Recently, a quadratic program based control barrier function (CBF) method has emerged as a way to enforce safety-critical constraints.
Together with control Lyapunov function (CLF), it forms a safety-critical control strategy, named CLF-CBF-QP, which can mediate between achieving the control objective and ensuring safety, while being executable in real-time.
However, once additional constraints such as input constraints are introduced, the CLF-CBF-QP may encounter infeasibility.
In order to address the challenge arises due to the infeasibility, we propose an optimal-decay form for safety-critical control wherein the decay rate of the CBF is optimized point-wise in time so as to guarantee point-wise feasibility when the state lies inside the safe set.
The proposed control design is numerically validated using an adaptive cruise control example.

\end{abstract}

\section{Introduction}
\label{sec:introduction}
\subsection{Motivation}
Safety-critical optimal control and planning is one of the fundamental problems in robotics, \textit{e.g.}, robots need to be able to safely avoid obstacles while using minimal energy.
In order to ensure the safety of robotic systems while achieving optimal performance, the tight coupling between potentially conflicting control objectives and safety criteria is usually formulated as an optimization problem.
Recent work \cite{ames2014control, ames2016control} formulates this problem using control barrier functions as an optimization problem, where the safety criteria is formulated as constraints.
Additionally, a robotic system usually has a constrained set for the admissible inputs due to physical limitations, which can be taken as additional constraints in the optimization problem \cite{nguyen2018dynamic, agrawal2017discrete, choi2020reinforcement}.
However, the input constraint and control barrier function constraint might be in conflict  in the optimization and make the optimization infeasible.
In this paper, we study this feasibility problem and illustrate a variant form of safety-critical control with control barrier functions which satisfies the input constraint and guarantees point-wise feasibility along the trajectory.

\subsection{Related Work}
CBFs have recently been introduced as a promising way to ensure set invariance by considering the system dynamics. 
Furthermore, a safety-critical control design for continuous-time systems was proposed by unifying a control Lyapunov function (CLF) and a control barrier function (CBF) through a quadratic program (CLF-CBF-QP) \cite{ames2016control, ames2019control}.
This method could be deployed as a real-time optimization-based controller with safety-critical constraints, shown in \cite{ames2014control, wu2015safety}.
The adaptive, robust, and stochastic cases of safety-critical control with CBF have been considered in \cite{nguyen2017robust, xu2015robustness, taylor2020adaptive}. CBFs have also been used for high relative degree safety constraints for nonlinear systems \cite{nguyen2016exponential, xiao2019control}.
Besides the continuous-time domain, the formulation of CBF was generalized into discrete-time systems in \cite{agrawal2017discrete}, and systems evolving on manifolds in \cite{wu2015safety}. Recently, CBF constraints were also applied in control problems using a data-driven approach \cite{li2019temporal, cheng2019end, taylor2020learning, robey2020learning, lindemann2020learning, robey2021learning} and optimal control design \cite{zeng2020safety, cohen2020approximate}.

However, the input constraint was not considered in the optimization in some of the previous work, such as \cite{ames2019control}, which means the optimized control input might not be executable due to the physical limitations of the system. Other work, such as \cite{wu2015safety, xu2015robustness, nguyen2017robust, nguyen2018dynamic, agrawal2017discrete, choi2020reinforcement}, did consider the input constraint, but the potential conflict between input constraint and CBF constraint was not addressed, potentially resulting in an unsolvable optimization problem.
In \cite{ames2016control}, a valid CBF is specifically designed for adaptive cruise control scenario with explicit integration problem over the system dynamics under input constraint, which could ensure the feasibility in the optimization.
However, the feasibility problem for general nonlinear systems remains as an unsolved topic for the safety-critical optimal control.
\subsection{Contribution}
The contributions of this paper are as follows.
\begin{itemize}
    \item We present the reasons of potential infeasibility in the CBF-QP and CLF-CBF-QP.
    \item With analysis in both input-space and state-space, we reveal quantitatively and qualitatively that the infeasibility appears potentially due to a small decay rate of the control barrier function constraint.
    \item We propose an optimal-decay form of CBF-QP and CLF-CBF-QP where the decay rate of the control barrier function is optimized.
    \item We prove that our optimal-decay formulation is point-wise feasible for any state lying 
    strictly inside the safe set. 
    An adaptive cruise control example is used to numerically verify this.
\end{itemize}

\subsection{Paper Structure}
The paper is organized as follows: in Sec. \ref{sec:background}, we present the background about control barrier functions and point out the potential infeasibility in safety-critical control.
In Sec. \ref{sec:point-wise-feasibility-input-space}, we analyze quantitatively the point-wise feasibility problem in the input-space.
In Sec. \ref{sec:point-wise-feasibility-state-space}, we show qualitatively the point-wise feasibility problem from the perspective of the state-space.
In Sec. \ref{sec:persistently-feasible-formulation}, we propose an optimal-decay form of CBF-QP and CLF-CBF-QP and prove their point-wise feasibility for any state lying strictly inside the safe set.
An adaptive cruise control example is used to numerically validate our proposed optimal-decay form in Sec. \ref{sec:example}.
Finally, Sec. \ref{sec:conclusion} provides concluding remarks. 

\section{Background}
\label{sec:background}

We consider a nonlinear affine system of the form
\begin{equation}
    \dot{\mathbf{x}}(t) = f(\mathbf{x}(t)) + g(\mathbf{x}(t)) \mathbf{u}, \label{eq:affine-system}
\end{equation}
where $\mathbf{x} \in \mathbb{R}^n$, $\mathbf{u} \in \mathbb{R}^m$, with $f: \mathbb{R}^n \rightarrow \mathbb{R}^n$ and $g: \mathbb{R}^n \rightarrow \mathbb{R}^{n \times m}$ being locally Lipschitz. The system is subject to input constraints
\begin{equation}
    \mathbf{u}(t) \in \mathcal{U}_{adm}(\mathbf{x}(t)), \ \forall t \geq 0, \label{eq:input-constraint}
\end{equation}
where $\mathcal{U}_{adm}(\mathbf{x}(t)) \subset \mathbb{R}^m$ denotes the set of admissible inputs, which could be state dependent.

We consider a set $\mathcal{C} \subset \mathbb{R}^n$ defined as the zero-superlevel set of a continuously differentiable function $h: \mathbb{R}^n \rightarrow \mathbb{R}$, yielding:
\begin{equation}
\label{eq:safe-set}
    \begin{split}
        \mathcal{C} &= \{\mathbf{x} \in \mathbb{R}^n : h(\mathbf{x}) \geq 0 \}, \\
        \partial \mathcal{C} &= \{\mathbf{x} \in \mathbb{R}^n : h(\mathbf{x}) = 0 \}, \\
        \text{Int}(\mathcal{C}) &= \{\mathbf{x} \in \mathbb{R}^n : h(\mathbf{x}) > 0 \}.
    \end{split}
\end{equation}
Throughout this paper, we refer to $\mathcal{C}$ as a safe set.

The definitions of control barrier functions and control Lyapunov functions are summarized as follow, see \cite{ames2019control} for detailed explanations.
The function $h$ becomes a control barrier function if $\dfrac{\partial h}{\partial \mathbf{x}} \neq 0$ for all $\mathbf{x} \in \partial \mathcal{C}$, and there exists an extended class $\mathcal{K}_{\infty}$ function $\alpha$ such that for the control system \eqref{eq:affine-system}, $h$ satisfies
\begin{equation}
    \exists ~\mathbf{u} ~\text{s.t.} ~\dot{h}(\mathbf{x}, \mathbf{u}) \geq -\alpha(h(\mathbf{x})), ~\alpha \in \mathcal{K}_{\infty}. \label{eq:cbf-original-definition}
\end{equation}

Besides the system safety, we are also interested in stabilizing the system with a feedback control law $\mathbf{u}$ under a control Lyapunov function $V$ with a class $\mathcal{K}$ function $\gamma$, \textit{i.e.},
\begin{equation}
    \exists ~\mathbf{u} ~\text{s.t.} ~\dot{V}(\mathbf{x}, \mathbf{u}) \leq -\gamma(V(\mathbf{x})), ~\gamma \in \mathcal{K}. \label{eq:clf-original-definition}
\end{equation}

Note that we can write down
\begin{equation}
    \dot{h}(\mathbf{x}, \mathbf{u}) = L_f h(\mathbf{x}) + L_g h(\mathbf{x}) \mathbf{u},
\end{equation}
where $L_f h(\mathbf{x})$ and $L_g h(\mathbf{x})$ are Lie-derivatives of $h(\mathbf{x})$ along $f(\mathbf{x})$ and $g(\mathbf{x})$, respectively. We can also write down
\begin{equation}
    \dot{V}(\mathbf{x}, \mathbf{u}) = L_f V(\mathbf{x}) + L_g V(\mathbf{x}) \mathbf{u},
\end{equation}
where $L_f V(\mathbf{x})$ and $L_g V(\mathbf{x})$ are Lie-derivatives of $V(\mathbf{x})$.
The above construction of the CLF and CBF allows us to define safety-critical control for a nonlinear affine system \eqref{eq:affine-system}.

Given a feedback controller $\mathbf{u} = k(\mathbf{x})$ for the control system \eqref{eq:affine-system}, we wish to guarantee safety.
We consider the following Quadratic Program (QP) based controller that finds the optimal $\mathbf{u}$ in the optimization as follows:
\noindent\rule{\columnwidth}{0.5pt}
\textbf{CBF-QP:}
\begin{subequations}
\label{eq:cbf-qp}
\begin{align}
    \mathbf{u}(\mathbf{x}) & = \argmin_{\mathbf{u} \in \mathbf{R}^m}  \dfrac{1}{2} ||\mathbf{u} - k(\mathbf{x})||^2 \label{subeq:cbf-qp-cost}\\
    \text{s.t.} \ & L_f h(\mathbf{x}) + L_g h(\mathbf{x}) \mathbf{u} \geq -\alpha(h(\mathbf{x})), \label{subeq:cbf-qp-cbf-constraint} \\
    \ & \mathbf{u} \in \mathcal{U}_{adm}(\mathbf{x}). \label{subeq:cbf-qp-input-constraint}
\end{align}
\end{subequations}
\noindent\rule{\columnwidth}{0.5pt}

When the input constraint \eqref{subeq:cbf-qp-input-constraint} is excluded, we have a single inequality constraint, thus the CBF-QP has a closed-form solution per the KKT conditions, and this method was used in \cite{freeman2008robust, ames2014rapidly}. However, when the input constraint is considered, there might not exist any $\mathbf{u}$ satisfying both input constraint and CBF constraint simultaneously. This could lead to a potential infeasible optimization problem.

We could also use a QP based formulation of safety-critical control which unifies safety and stability.
Concretely, we consider the following QP based controller:
\noindent\rule{\columnwidth}{0.5pt}
\textbf{CLF-CBF-QP:}
\begin{subequations}
\label{eq:clf-cbf-qp}
\begin{align}
    \mathbf{u}(\mathbf{x}) & = \argmin_{(\mathbf{u}, \delta) \in \mathbf{R}^{m+1}} \dfrac{1}{2} \mathbf{u}^T H(\mathbf{x}) \mathbf{u} + p \delta^2 \label{subeq:clf-cbf-qp-cost} \\
    \text{s.t.} \ & L_f V(\mathbf{x}) + L_g V(\mathbf{x}) \mathbf{u} \leq -\gamma(V(\mathbf{x})) + \delta, \label{subeq:clf-cbf-qp-clf-constraint}\\ 
    \ & L_f h(\mathbf{x}) + L_g h(\mathbf{x}) \mathbf{u} \geq -\alpha(h(\mathbf{x})), \label{subeq:clf-cbf-qp-cbf-constraint} \\
    \ & \mathbf{u} \in \mathcal{U}_{adm}(\mathbf{x}). \label{subeq:clf-cbf-qp-input-constraint}
\end{align}
\end{subequations}
\noindent\rule{\columnwidth}{0.5pt}
where $H(\mathbf{x})$ is any positive definite matrix (point-wise in $\mathbf{x}$), and we have a relaxation variable $\delta$ on the CLF constraint \eqref{subeq:clf-cbf-qp-clf-constraint} with additional quadratic cost in \eqref{subeq:clf-cbf-qp-cost}.
When we exclude the input constraint \eqref{subeq:clf-cbf-qp-input-constraint} out of the optimization, the solvability can be guaranteed since the CLF constraint is relaxed and the CBF constraint \eqref{subeq:clf-cbf-qp-cbf-constraint} is the only hard constraint. This method was applied in \cite{ames2013towards, galloway2015torque}. However, when the input constraint \eqref{subeq:clf-cbf-qp-input-constraint} is also considered, we might again encounter an infeasible optimization problem.
\section{Point-wise Feasibility in Input-Space}
\label{sec:point-wise-feasibility-input-space}
Having presented the background of safety-critical control, we will now show how to pick an appropriate $\alpha$ in the CLF-CBF-QP/CBF-QP to guarantee point-wise feasibility from the perspective of input-space, \textit{i.e.}, how to guarantee feasibility of the optimization problem at a given state $\mathbf{x}(t) = \mathbf{x}_t$.

For the state $\mathbf{x}_t$ at time $t$, we define the feasible superlevel set $\mathcal{U}_{cbf}(\mathbf{x}_t)$ as the region satisfying CBF constraint in input-space, \textit{i.e.},
\begin{equation}
\label{eq:U-cbf}
    \begin{split}
        \mathcal{U}_{cbf}(\mathbf{x}_t) :=& \{\mathbf{u} \in \mathbb{R}^m: L_f h(\mathbf{x}_t) + L_g h(\mathbf{x}_t)\mathbf{u} \geq \\ &- \alpha(h(\mathbf{x}_t))\}.
    \end{split}
\end{equation}
Note that $ \mathcal{U}_{cbf}(\mathbf{x}_t)$ is a half-space in the input-space $\mathbb{R}^m$ since the CBF constraint is affine in $\mathbf{u}$. The level set of the CBF constraint in input-space is defined as $\partial\mathcal{U}_{cbf}(\mathbf{x}_t)$,
\begin{equation}
\label{eq:Int-U-cbf}
    \begin{split}
        \partial\mathcal{U}_{cbf}(\mathbf{x}_t) =&\{\mathbf{u} \in \mathbb{R}^m: L_f h(\mathbf{x}_t)+L_g h(\mathbf{x}_t)\mathbf{u}= \\ &-\alpha(h(\mathbf{x}_t))\}.
    \end{split}
\end{equation}
Then the feasibility problem becomes whether the intersection between $\mathcal{U}_{cbf}(\mathbf{x}_t)$ and $\mathcal{U}_{adm}(\mathbf{x}_t)$ as defined in \eqref{eq:input-constraint} is empty or not. If the intersection is not empty, the optimization in CBF-QP or CLF-CBF-QP is feasible at $\mathbf{x}_t$.

In order to provide a quantitative way of explaining point-wise feasibility from the perspective of input-space, We suppose that the set of admissible inputs $\mathcal{U}_{adm}(\mathbf{x}_t)$ could be described as a convex polytope defined with $r$ vertices, presented in Fig. \ref{fig:feasibility-input-space}, where each vertex is noted as $v_i(\mathbf{x}_t) \in \mathbb{R}^m$ and $i \in \{1, 2, ..., r(\mathbf{x}_t)\}$. Then we have $\mathcal{U}_{adm}(\mathbf{x}_t)$ could be written as
\begin{equation}
\label{eq:U_adm_polytope}
    \begin{split}
        \mathcal{U}_{adm}(\mathbf{x}_t) =& \{\mathbf{u} \in \mathbb{R}^m: \mathbf{u} = \sum_{i=1}^{r(\mathbf{x}_t)} \lambda_i(\mathbf{x}_t) v_i(\mathbf{x}_t), \\
        & \sum_{i=1}^{r(\mathbf{x}_t)} \lambda_i(\mathbf{x}_t) = 1,  \lambda_i(\mathbf{x}_t) \geq 0\}.
    \end{split}
\end{equation}
\begin{figure}
    \centering
    \includegraphics[width=1.0\linewidth]{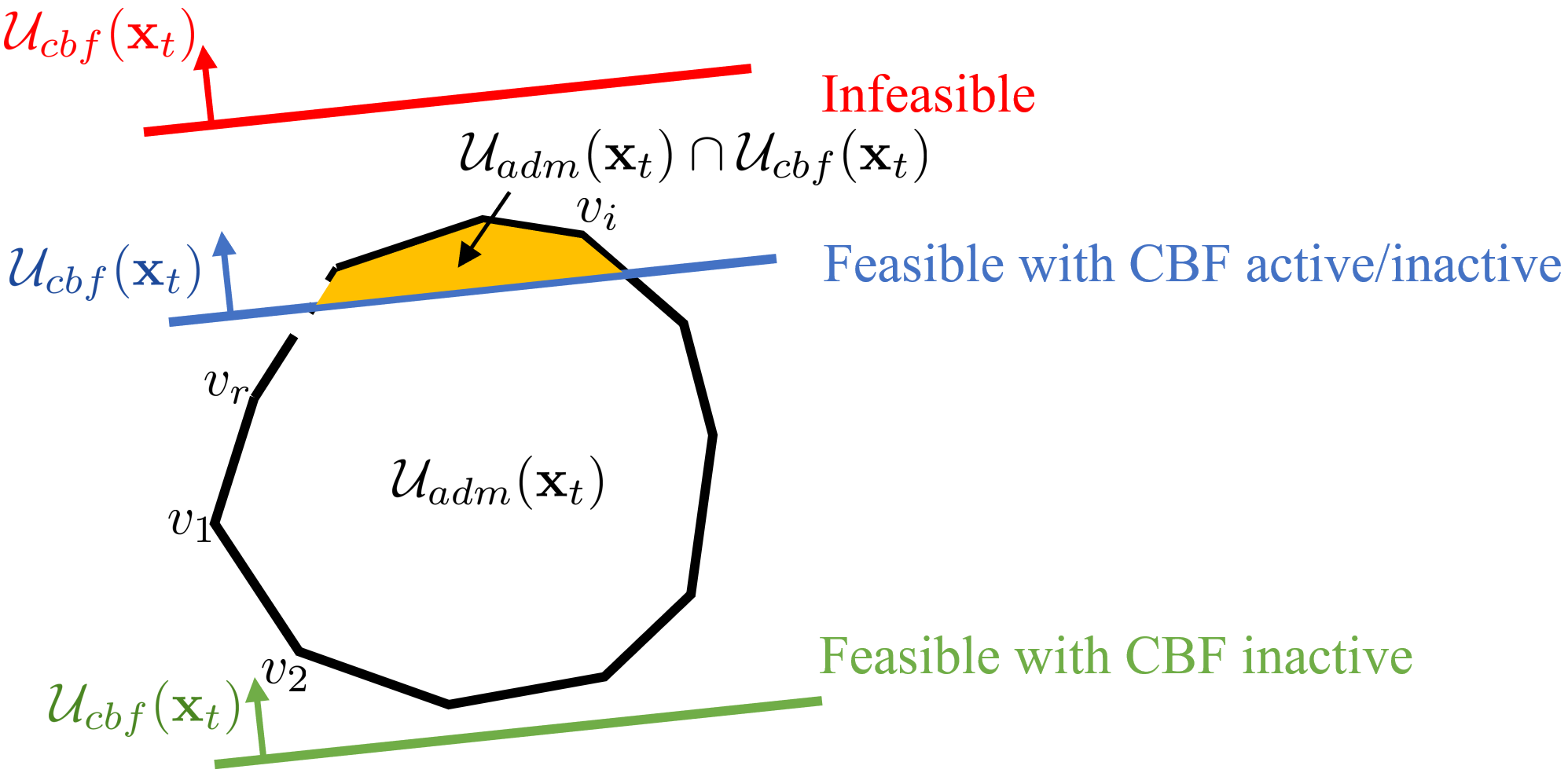}
    \caption{The set of admissible inputs $\mathcal{U}_{adm}(\mathbf{x}_t)$ is defined as a convex polytope with $r$ vertices. The upper half-spaces, indicated by the red, blue, green arrows, repsent the feasible superlevel set $\mathbf{u}_{cbf}(\mathbf{x}_t)$ corresponding to three different superlevel sets. We can see clearly that at least one vertex of polytope $\mathcal{U}_{adm}(\mathbf{x}_t)$ needs to lie inside or on the surface of the set $\mathcal{U}_{cbf}(\mathbf{x}_t)$ to guarantee the intersection set between $\mathcal{U}_{adm}(\mathbf{x}_t)$ and $\mathcal{U}_{cbf}(\mathbf{x}_t)$ is not empty.}
    \label{fig:feasibility-input-space}
\end{figure}
Then the necesary and sufficient condition of having the intersection between $\mathcal{U}_{cbf}(\mathbf{x}_t)$ defined in \eqref{eq:U-cbf} and $\mathcal{U}_{adm}(\mathbf{x}_t)$ defined in \eqref{eq:U_adm_polytope} being not empty is that, at least one vertex of polytope $\mathcal{U}_{adm}(\mathbf{x}_t)$ lies inside or on the surface of the set $\mathcal{U}_{cbf}(x)$, shown as follows,
\begin{equation}
    \mathcal{U}_{adm}(\mathbf{x}_t) \cap \mathcal{U}_{cbf}(\mathbf{x}_t) \neq \emptyset \Leftrightarrow \exists i, v_i \in \mathcal{U}_{cbf}(\mathbf{x}_t).
\end{equation}
Hence, the set of candidate $\alpha$ functions in the CBF constraint \eqref{eq:cbf-original-definition}, denoted as $\mathcal{K}_{fea}^{\alpha}(\mathbf{x}_t)$, that guarantees the feasibility of the optimization, becomes the union of the set of functions that guarantees any vertex in $\mathcal{U}_{adm}(\mathbf{x}_t)$ lies inside or on the surface of the set $\mathcal{U}_{cbf}(\mathbf{x}_t)$, which could be written as follows
\begin{equation*}
    \begin{split}
    \mathcal{K}_{fea}^{\alpha}(\mathbf{x}_t) =& \bigcup_{i=1}^{r(\mathbf{x}_t)} \{\alpha \in \mathcal{K}_{\infty}: \alpha(h(\mathbf{x}_t)) \geq\\
    & - L_f h(\mathbf{x}_t) - L_g h(\mathbf{x}_t) v_i(\mathbf{x}_t) \},
    \end{split}
\end{equation*}
and it could be reformulated with a minimum operator
\begin{equation}
    \label{eq:K-feas-alpha}
    \begin{split}
        \mathcal{K}_{fea}^{\alpha}(\mathbf{x}_t) =& \{\alpha \in \mathcal{K}_{\infty} : \alpha(h(\mathbf{x}_t)) \geq \\
        & \min_{i} (- L_f h(\mathbf{x}_t) - L_g h(\mathbf{x}_t) v_i(\mathbf{x}_t)) \}.
    \end{split}
\end{equation}
Therefore, given a function $\alpha$, the CBF-QP/CLF-CBF-QP in \eqref{eq:cbf-qp} and \eqref{eq:clf-cbf-qp} are point-wise feasible if $\exists \alpha \in \mathcal{K}_{\infty}$ s.t.,
\begin{equation*}
    \alpha(h(\mathbf{x}_t)) \geq \min_{i} (- L_f h(\mathbf{x}_t) - L_g h(\mathbf{x}_t) v_i(\mathbf{x}_t))
\end{equation*}
is satisfied at state $\mathbf{x}_t$.

Moreover, given a state $\mathbf{x}_t$, from \eqref{eq:K-feas-alpha}, we can see when 
\begin{equation}
\label{eq:input-space-always-point-wise-feasible}
    \min_{i} (- L_f h(\mathbf{x}_t) - L_g h(\mathbf{x}_t) v_i(\mathbf{x}_t)) \leq 0,
\end{equation}
we have $\mathcal{K}_{fea}^{\alpha}(\mathbf{x}_t) = \mathcal{K}_{\infty}$, \textit{i.e.}, we have the point-wise feasibility, for any $\alpha \in \mathcal{K}_{\infty}$. However, when \eqref{eq:input-space-always-point-wise-feasible} is not satisfied,
$\mathcal{K}_{fea}^{\alpha}(\mathbf{x}_t)$ becomes a proper subset of $\mathcal{K}_{\infty}$ and the function $\alpha$ has a lower bound at state $\mathbf{x}_t$ to guarantee the point-wise feasibility.

From above, we could pick an appropriate $\alpha$ function to guarantee point-wise feasibility at a given state $\mathbf{x}_t$. Beside feasibility, we are also interested in whether the CBF constraint is activated during the optimization.

\begin{remark}
We have $\forall i, v_i(\mathbf{x}_t) \in \mathcal{U}_{cbf}(\mathbf{x}_t)$ when
\begin{equation}
\label{eq:input-space-cbf-confine}
    \alpha(h(\mathbf{x}_t)) \geq \max_{i} (- L_f h(\mathbf{x}_t) - L_g h(\mathbf{x}_t) v_i(\mathbf{x}_t)),
\end{equation}
which means $\mathcal{U}_{adm}(\mathbf{x}_t) \subset \mathcal{U}_{cbf}(\mathbf{x}_t)$, therefore in this case, the CBF constraint does not confine the input constraint during the optimization. This case is illustrated by the green $\mathcal{U}_{cbf}(\mathbf{x}_t)$ in Fig. \ref{fig:feasibility-input-space}. When \eqref{eq:input-space-cbf-confine} is not satisfied, we have the CBF constraint confine the input constraint, \textit{i.e.}, the intersection between $\mathcal{U}_{adm}(\mathbf{x}_t)$ and $\mathcal{U}_{cbf}(\mathbf{x}_t)$ becomes a proper set of $\mathcal{U}_{adm}(\mathbf{x}_t)$, indicated by the blue $\mathcal{U}_{cbf}(\mathbf{x}_t)$ in Fig. \ref{fig:feasibility-input-space}.
When the intersection is empty, illustrated by the red $\mathcal{U}_{cbf}(\mathbf{x}_t)$ in Fig. \ref{fig:feasibility-input-space}, the optimization problem is infeasible.
\end{remark}

\begin{remark}
\label{remark:input-space-cbf-activation}
We say a constraint is active in the optimization when the optimal solution lies on the constraint line \cite{boyd2004convex}. When the CBF constraint does not confine the input constraint, the CBF constraint becomes inactive during the optimization. However, when the CBF constraint confines the input constraint, it does not necessarily guarantee the CBF constraint activation, as the value of optimal solution also depends on the design of cost function in the optimization. Therefore, the choice of $\alpha$ function together with the design of cost function determine the activation of CBF constraint in the safety-critical optimal control, shown in Fig. \ref{fig:feasibility-input-space}.
\end{remark}

\begin{remark}
\label{remakr:set-property-bounded-unbounded}
In this section, the admissible set is assumed as a bounded convex polytope. In fact, these discussions could be easily generalized for an unbounded convex set, as any unbounded set could be regarded as a limit of a sequence of bounded sets \cite[Chap. 11]{bartle2000introduction}. Moreover, our proposed optimal-decay approach in this paper that guarantees point-wise feasibility will only rely on the admissible set being convex while not strictly assuming it as a bounded convex polytope, see Theorem \ref{thm:point-wise-feasibility}.
\end{remark}

We have seen point-wise feasibility in input-space in this section.
Next, we will look at point-wise feasibility in the state-space in Sec. \ref{sec:point-wise-feasibility-state-space}.
After that, we provide a formulation which allows us to guarantee the point-wise feasibility without parameter tuning in Sec. \ref{sec:persistently-feasible-formulation}.

\section{Point-wise Feasibility in State-Space}
\label{sec:point-wise-feasibility-state-space}
\begin{figure*}
    \centering
    \begin{subfigure}[t]{0.24\linewidth}
        \centering
        \includegraphics[height=3.8cm]{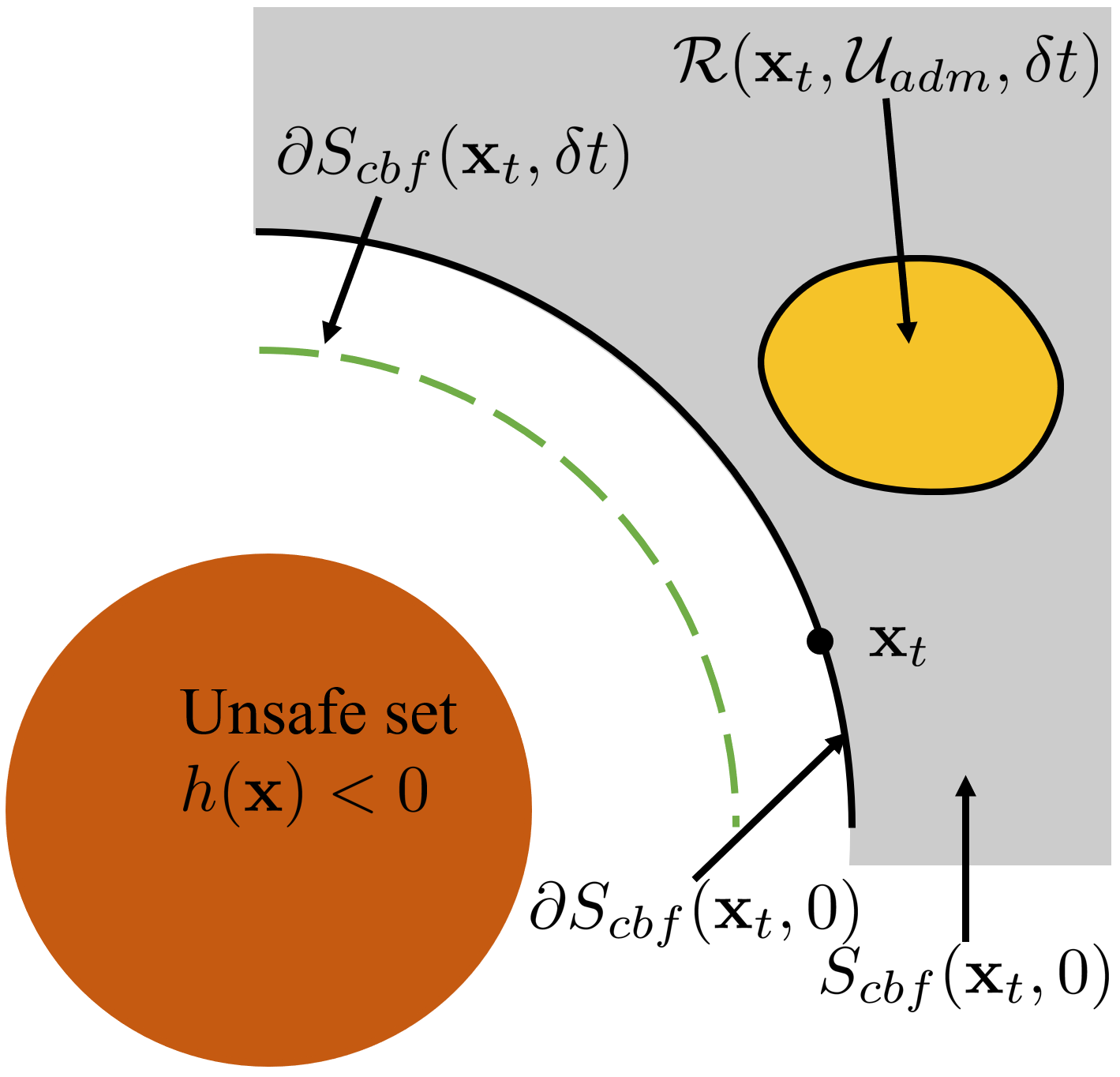}
        \caption{}
        \label{subfig:feasibility-far-away}
    \end{subfigure}
    \begin{subfigure}[t]{0.24\linewidth}
        \centering
        \includegraphics[height=3.8cm]{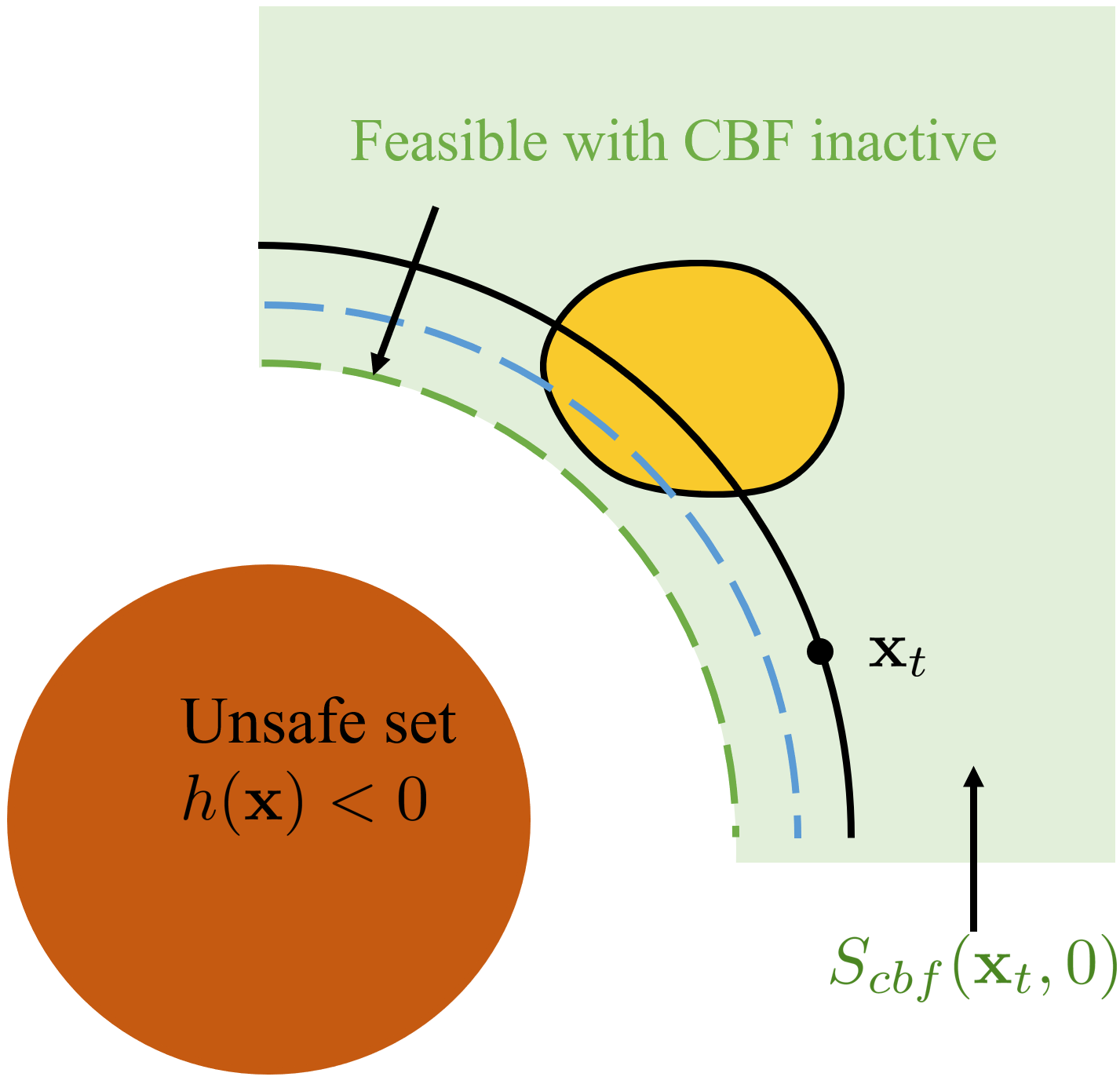}
        \caption{}
        \label{subfig:feasibilty-around}
    \end{subfigure}
    \begin{subfigure}[t]{0.24\linewidth}
        \centering
        \includegraphics[height=3.8cm]{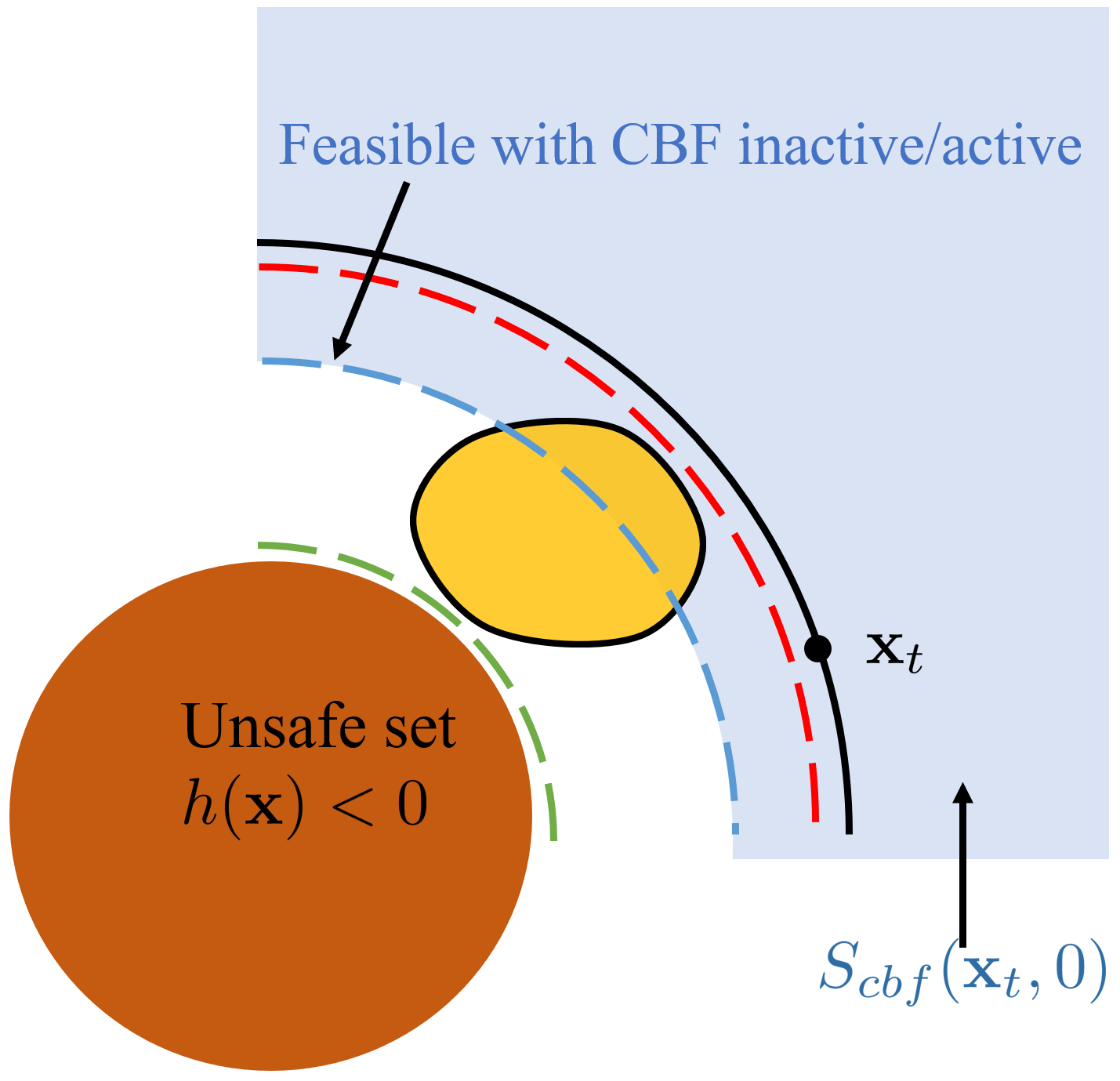}
        \caption{}
        \label{subfig:feasibility-closer}
    \end{subfigure}
    \begin{subfigure}[t]{0.24\linewidth}
        \centering
        \includegraphics[height=3.8cm]{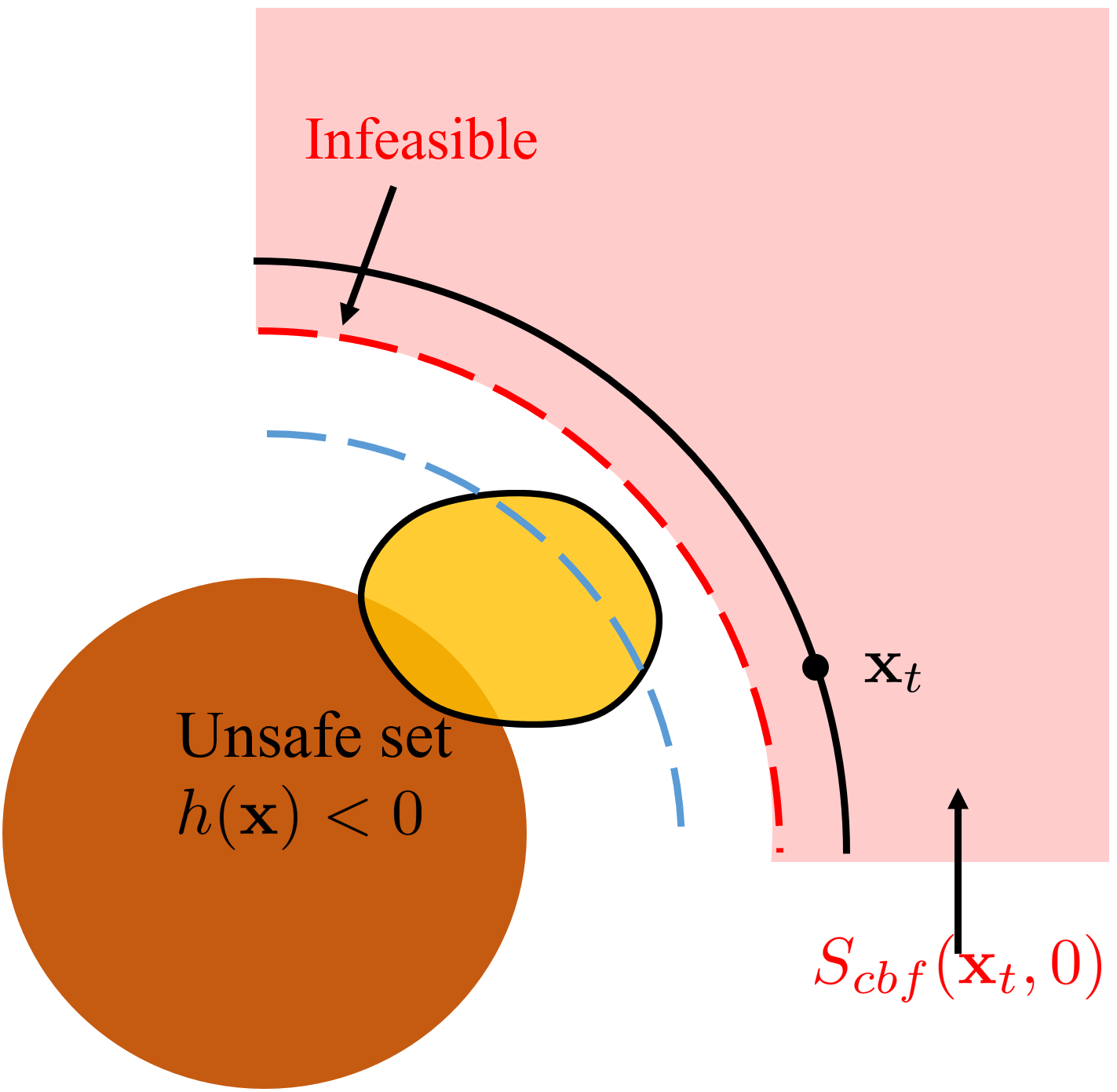}
        \caption{}
        \label{subfig:feasibility-very-close}
    \end{subfigure}
    \caption{Point-wise feasibility problem in different scenarios. Given a state $\mathbf{x}_t$, we illustrate $\mathcal{R}(\mathbf{x}_t, \mathcal{U}_{adm}, \delta t)$ as a closed region of yellow color and $\partial\mathcal{S}_{cbf}(\mathbf{x}_t, 0)$ as level sets in black. $\partial\mathcal{S}_{cbf}(\mathbf{x}_t, \delta t)$ are plotted in red (infeasible), blue (feasible with active/inactive CBF constraint) and green (feasible with inactive CBF constraint) for different scenarios. $\mathcal{S}_{cbf}(\mathbf{x}_t, 0)$ (grey) and $\mathcal{S}_{cbf}(\mathbf{x}_t, \delta t)$ (green, blue, red) are regions on the top-right side of their corresponding level sets.}
    \label{fig:feasibility-state-space}
\end{figure*}
The point-wise feasibility problem could also be understood through qualitative illustration in the state-space.
Given a state $\mathbf{x}_t$ at time $t$, we will define $\mathcal{R}(\mathbf{x}_t, \mathcal{U}_{adm}, \delta t)$ to represent the set of reachable states after infinitesimal time $\delta t$ while satisfying system dynamics \eqref{eq:affine-system} and input constraint $\mathcal{U}_{adm}(\mathbf{x}_t)$ starting from state $\mathbf{x}(t) = \mathbf{x}_t$, \textit{i.e.},
\begin{equation}
\label{eq:reachable-set-infinitesimal-time}
    \begin{split}
        \mathcal{R}(\mathbf{x}_t, \mathcal{U}_{adm}, \delta t) = \{\mathbf{x}(t+\delta t) \in \mathbb{R}^n: \forall \bar{t} \in [t, t + \delta], \\
        \dot{\mathbf{x}}(\bar{t}) = f(\mathbf{x}(\bar{t})) + g(\mathbf{x}(\bar{t})) \mathbf{u}(\bar{t}), \\ 
        \mathbf{u}(\bar{t}) \in \mathcal{U}_{adm}(\mathbf{x}(\bar{t})), \mathbf{x}(t) = \mathbf{x}_t\}.
    \end{split}
\end{equation}
The evolution of the system dynamics also needs to be safe and thus satisfy the definition of control barrier function \eqref{eq:cbf-original-definition} during time segment $[t, t + \delta t]$, thus we have
\begin{equation}
    h(\mathbf{x}(t+\delta t)) \geq h(\mathbf{x}(t)) - \int_{t}^{t+\delta t} \alpha(h(\mathbf{x}(\bar{t}))) d\bar{t}.
\end{equation}
This allows us to define the superlevel set in state-space for $\mathbf{x}(t+\delta t)$ satisfying the CBF constraints,
\begin{equation}
    \begin{split}
        \mathcal{S}_{cbf}&(\mathbf{x}_t, \delta t) = \{\mathbf{x} \in \mathbb{R}^n: h(\mathbf{x})
         \geq h(\mathbf{x}_t) \\ & - \int_{t}^{t+\delta t} \alpha(h(\mathbf{x}(\bar{t}))) d\bar{t}, \mathbf{x}(t) = \mathbf{x}_t\}.
    \end{split}
\end{equation}
We also define
\begin{equation}
    \mathcal{S}_{cbf}(\mathbf{x}_t, 0) = \{\mathbf{x} \in \mathbb{R}^n:  h(\mathbf{x}) \geq h(\mathbf{x}_t)\},
\end{equation}
motivated by $\int_{t}^{t+\delta t} \alpha(h(\mathbf{x}(\bar{t}))) d\bar{t} = 0$ when $\delta t = 0$.
The set $\mathcal{S}_{cbf}(\mathbf{x}_t, 0)$ corresponds to the set of all possible $\mathbf{x}$ for which $h(\mathbf{x}) \geq h(\mathbf{x}_t)$.

Since the state $\mathbf{x}(t + \delta t)$ should satisfy the system dynamics and control barrier function constraint, the optimization problem is then point-wise feasible at state $\mathbf{x}_t$ when the intersection between $\mathcal{R}(\mathbf{x}_t, \mathcal{U}_{adm}, \delta t)$ and $\mathcal{S}_{cbf}(\mathbf{x}_t, \delta t)$ is not empty.
We are interested in whether the intersection between $\mathcal{R}(\mathbf{x}_t, \mathcal{U}_{adm}, \delta t)$ and $\mathcal{S}_{cbf}(\mathbf{x}_t, \delta t)$ is empty or not under different circumstances. However, it is numerically complicated or even impossible to calculate $\mathcal{R}(\mathbf{x}_t, \mathcal{U}_{adm}, \delta t)$ for a general nonlinear affine system. Thus, we provide an intuition of understanding point-wise feasibility problem through geometry in state-space.

\begin{remark}
Notice that we always have $\mathcal{S}_{cbf}(\mathbf{x}_t, 0) \subset \mathcal{S}_{cbf}(\mathbf{x}_t, \delta t)$ as $\alpha$ is class $\mathcal{K}_{\infty}$ function and $h(.)$ is positive.
\end{remark}

In practice, we usually define the safety set $\mathcal{C}$ in \eqref{eq:safe-set} corresponding to the free space outside the obstacle, illustrated in Fig. \ref{fig:feasibility-state-space}. $\partial\mathcal{S}_{cbf}(\mathbf{x}_t, 0)$ and $\partial\mathcal{S}_{cbf}(\mathbf{x}_t, \delta t)$ are illustrated with black 
solid and colorful dashed curves respectively, and $\mathcal{S}_{cbf}(\mathbf{x}_t, \delta t)$, $\mathcal{S}_{cbf}(\mathbf{x}_t, 0)$ are illustrated as the regions on the top-right side of them. Since we always have $\mathcal{S}_{cbf}(\mathbf{x}_t, 0) \subset \mathcal{S}_{cbf}(\mathbf{x}_t, \delta t)$, $\partial\mathcal{S}_{cbf}(\mathbf{x}_t, \delta t)$ is always closer to the obstacle, lying on the bottom-left side of $\partial\mathcal{S}_{cbf}(\mathbf{x}_t, 0)$ for any choice of $\alpha$ function. We classify the point-wise feasibility problem into three scenarios as follows.

\subsection{Moving away from obstacles}
When $\mathcal{R}(\mathbf{x}_t, \mathcal{U}_{adm}, \delta t) \subset \mathcal{S}_{cbf}(\mathbf{x}_t, 0)$, \textit{i.e.}, the system is moving away from obstacles. The scenario is illustrated in Fig. \ref{subfig:feasibility-far-away}. In this scenario, we have $\mathcal{R}(\mathbf{x}_t, \mathcal{U}_{adm}, \delta t) \subset \mathcal{S}_{cbf}(\mathbf{x}_t, 0) \subset \mathcal{S}_{cbf}(\mathbf{x}_t, \delta t)$. This means that for any class $\mathcal{K}_{\infty}$ function $\alpha$, the optimization will always be point-wise feasible at state $\mathbf{x}_t$.

\subsection{Moving around obstacles}
When $\mathcal{R}(\mathbf{x}_t, \mathcal{U}_{adm}, \delta t)$ intersects with $\partial \mathcal{S}_{cbf}(\mathbf{x}_t, 0)$, we have the reachable state-space lies partly on the top-right side of $\partial \mathcal{S}_{cbf}(\mathbf{x}_t, 0)$, shown in Fig. \ref{subfig:feasibilty-around}. In this scenario, the optimization is always feasible for any class $\mathcal{K}_{\infty}$ function $\alpha$, as $\mathcal{R}(\mathbf{x}_t, \mathcal{U}_{adm}, \delta t) \cap \mathcal{S}_{cbf}(\mathbf{x}_t, 0)$ is not empty and is always a subset of $\mathcal{S}_{cbf}(\mathbf{x}_t, \delta t)$.

\subsection{Moving close to obstacles}
When $\mathcal{R}(\mathbf{x}_t, \mathcal{U}_{adm}, \delta t) \cap \mathcal{S}_{cbf}(\mathbf{x}_t, 0) = \emptyset$, this usually happens when the system is moving close to obstacles, shown in Fig. \ref{subfig:feasibility-closer} and \ref{subfig:feasibility-very-close}. In this scenario, when $\alpha$ becomes too small (CBF level set in red), the intersection between $\mathcal{R}(\mathbf{x}_t, \mathcal{U}_{adm}, \delta t)$ and $\mathcal{S}_{cbf}(\mathbf{x}_t, \delta t)$ is empty and the optimization problem becomes infeasible at state $\mathbf{x}_t$. This indicates that $\alpha(h(\mathbf{x}_t))$ needs to be greater than a lower bound to make the optimization point-wise feasible at state $\mathbf{x}_t$.

\begin{remark}
\label{remark:space-space-cbf-activation}
When $\mathcal{R}(\mathbf{x}_t, \mathcal{U}_{adm}, \delta t) \subset \mathcal{S}_{cbf}(\mathbf{x}_t, \delta t)$, the CBF constraint does not confine the reachable set which means the CBF constraint is inactive in the optimization, shown with green level sets $\mathcal{S}_{cbf}(\mathbf{x}_t, \delta t)$ in Fig. \ref{fig:feasibility-state-space}.
When the intersection between $\mathcal{R}(\mathbf{x}_t, \mathcal{U}_{adm}, \delta t)$ and $\mathcal{S}_{cbf}(\mathbf{x}_t, \delta t)$ is non-empty and becomes as a proper subset of $\mathcal{R}(\mathbf{x}_t, \mathcal{U}_{adm}, \delta t)$ shown with blue level sets $\mathcal{S}_{cbf}(\mathbf{x}_t, 0)$ in Fig. \ref{fig:feasibility-state-space}, the CBF constraint does confine the reachable set.
This does not necessarily guarantee CBF constraint activation, as the constraint activation also depends on the design of cost function, which is similar to what we discussed in Remark \ref{remark:input-space-cbf-activation}.
For the red level sets $\mathcal{S}_{cbf}(\mathbf{x}_t, 0)$ in Fig. \ref{subfig:feasibility-closer} and \ref{subfig:feasibility-very-close}, the optimization is infeasible.
\end{remark}

\section{Point-wise Feasible Formulation}
\label{sec:persistently-feasible-formulation}
\subsection{Formulation and Point-wise Feasibility}
In Sec. \ref{sec:point-wise-feasibility-input-space} and \ref{sec:point-wise-feasibility-state-space}, we have seen that the optimization in safety-critical control might become point-wise infeasible at a given state $\mathbf{x}_t$, if either $\mathcal{U}_{cbf}(\mathbf{x}_t) \cap \mathcal{U}_{adm}(\mathbf{x}_t) = \emptyset$ or $\mathcal{R}(\mathbf{x}_t, \mathcal{U}_{adm}, \delta t) \cap \mathcal{S}_{cbf}(\mathbf{x}_t, \delta t) = \emptyset$ which means the convergence of control barrier function is less than a lower bound. In other words, when the decay rate of the lower bound of $h(\mathbf{x})$ is not large enough, we might encounter infeasibility in the optimization problem. To solve this problem, we could manually tune the form of $\alpha$ function to make the optimization feasible, however, this tuning process becomes relatively difficult when the system dynamics becomes complicated. This motivates us to introduce an optimal-decay form of CBF constraint to guarantee the point-wise feasibility for any $\mathbf{x}$ with $h(\mathbf{x}) > 0$.

With the same notation as CBF-QP in \eqref{eq:cbf-qp} and CLF-CBF-QP in \eqref{eq:clf-cbf-qp}, we introduce an optimal-decay form of CBF-QP and CLF-CBF-QP in this section. The optimal decay CBF-QP is formulated as follows

\noindent\rule{\columnwidth}{0.5pt}
\textbf{Optimal-decay CBF-QP:}
\begin{subequations}
\label{eq:soft-cbf-qp}
\begin{align}
    \mathbf{u}(\mathbf{x}) & = \argmin_{(\mathbf{u}, \omega) \in \mathbf{R}^{m+1}}  \dfrac{1}{2} ||\mathbf{u} - k(\mathbf{x})||^2 + p_{\omega} (\omega - \omega_0)^2 \label{subeq:soft-cbf-qp-cost}\\
    \text{s.t.} \ & L_f h(\mathbf{x}) + L_g h(\mathbf{x}) \mathbf{u} \geq - \omega \alpha(h(\mathbf{x})), \label{subeq:soft-cbf-qp-cbf-constraint} \\
    \ & \mathbf{u} \in \mathcal{U}_{adm}(\mathbf{x}). \label{subeq:soft-cbf-qp-input-constraint}
\end{align}
\end{subequations}
\noindent\rule{\columnwidth}{0.5pt}
Compared with CBF-QP, we optimize the decay rate of the CBF constraint with a new variable $\omega$ in \eqref{subeq:soft-cbf-qp-cbf-constraint} and add a quadratic cost in \eqref{subeq:soft-cbf-qp-cost}. $p_{\omega}$ is a positive scalar and a scalar $\omega_0$ could usually be chosen to tune the performance of the controller. Similarly, we develop the optimal form of CLF-CBF-QP as follows

\noindent\rule{\columnwidth}{0.5pt}
\textbf{Optimal-decay CLF-CBF-QP:}
\begin{subequations}
\label{eq:soft-clf-cbf-qp}
\begin{align}
    \mathbf{u}(\mathbf{x}) &= \argmin_{(\mathbf{u}, \delta, \omega) \in \mathbf{R}^{m+2}} \dfrac{1}{2} \mathbf{u}^T H(\mathbf{x}) \mathbf{u} + p \delta^2 + p_{\omega} (\omega - \omega_0)^2 \label{subeq:soft-clf-cbf-qp-cost} \\
    \text{s.t.} \ & L_f V(\mathbf{x}) + L_g V(\mathbf{x}) \mathbf{u} \leq -\gamma(V(\mathbf{x})) + \delta, \label{subeq:soft-clf-cbf-qp-clf-constraint}\\ 
    \ & L_f h(\mathbf{x}) + L_g h(\mathbf{x}) \mathbf{u} \geq - \omega\alpha(h(\mathbf{x})), \label{subeq:soft-clf-cbf-qp-cbf-constraint} \\
    \ & \mathbf{u} \in \mathcal{U}_{adm}(\mathbf{x}). \label{subeq:soft-clf-cbf-qp-input-constraint}
\end{align}
\end{subequations}
\noindent\rule{\columnwidth}{0.5pt}
We call \eqref{eq:soft-cbf-qp} and \eqref{eq:soft-clf-cbf-qp} as optimal-decay form of CBF-QP and CLF-CBF-QP, since it actually optimizes the decay rate of lower bound of the control barrier function with variable $\omega$ in the optimization. The point-wise feasibility of optimization for any $\mathbf{x} \in \text{Int}(\mathcal{C})$ in \eqref{eq:soft-cbf-qp}, \eqref{eq:soft-clf-cbf-qp} is illustrated with the following theorem.

\begin{theorem}
\label{thm:point-wise-feasibility}
When $\mathcal{U}_{adm}(\mathbf{x})$ is convex, the optimizations in optimal-decay CBF-QP and CLF-CBF-QP are point-wise feasible for any $\mathbf{x}$ lying inside $\mathcal{C}$, \textit{i.e.}, the optimizations are solvable with a unique solution for any $\mathbf{x}$ when $h(\mathbf{x}) > 0$.
\end{theorem}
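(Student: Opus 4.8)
The plan is to split the statement into two independent parts: (a) the feasible set of each of the two optimal-decay QPs is nonempty whenever $h(\mathbf{x})>0$, and (b) on that feasible set the objective is strictly convex and coercive, so a unique minimizer exists. Part (a) is where the newly introduced decay variable $\omega$ does all the work: because $\omega$ is a free decision variable and $\alpha(h(\mathbf{x}))>0$ strictly on $\text{Int}(\mathcal{C})$, the CBF inequality can always be slackened by enlarging $\omega$, so feasibility reduces to nonemptiness of $\mathcal{U}_{adm}(\mathbf{x})$ (and, for the CLF-CBF case, to the fact that the CLF constraint is already relaxed by $\delta$). Part (b) is then a standard convex-analysis argument.

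For (a), fix $\mathbf{x}$ with $h(\mathbf{x})>0$ and pick any $\mathbf{u}_0\in\mathcal{U}_{adm}(\mathbf{x})$, which is nonempty by the standing assumption. Since $\alpha$ is an extended class $\mathcal{K}_{\infty}$ function and $h(\mathbf{x})>0$, we have $\alpha(h(\mathbf{x}))>0$, so the scalar $\bar\omega := \big(-L_f h(\mathbf{x})-L_g h(\mathbf{x})\mathbf{u}_0\big)/\alpha(h(\mathbf{x}))$ is well defined and every $\omega\geq\bar\omega$ makes $L_f h(\mathbf{x})+L_g h(\mathbf{x})\mathbf{u}_0\geq-\omega\alpha(h(\mathbf{x}))$ hold; hence $(\mathbf{u}_0,\omega)$ satisfies \eqref{subeq:soft-cbf-qp-cbf-constraint} and the feasible set of the optimal-decay CBF-QP is nonempty. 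For the optimal-decay CLF-CBF-QP one additionally takes $\delta$ large, e.g. $\delta\geq L_f V(\mathbf{x})+L_g V(\mathbf{x})\mathbf{u}_0+\gamma(V(\mathbf{x}))$, so that \eqref{subeq:soft-clf-cbf-qp-clf-constraint} also holds, producing a feasible triple $(\mathbf{u}_0,\delta,\omega)$. I would emphasize that this step fails exactly on $\partial\mathcal{C}$, where $\alpha(h(\mathbf{x}))=0$ and $\omega$ can no longer loosen the CBF constraint — which is precisely why the theorem is restricted to $\mathbf{x}$ with $h(\mathbf{x})>0$.

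For (b), the CBF/CLF constraints are affine in the decision variables, so each defines a closed half-space; intersecting with $\mathcal{U}_{adm}(\mathbf{x})\times\mathbb{R}$ (resp. $\times\mathbb{R}^2$) gives a convex feasible set, closed under the standing assumption that $\mathcal{U}_{adm}(\mathbf{x})$ is closed. The objective in \eqref{subeq:soft-cbf-qp-cost} is $\tfrac12\|\mathbf{u}-k(\mathbf{x})\|^2+p_\omega(\omega-\omega_0)^2$ with $p_\omega>0$, and the one in \eqref{subeq:soft-clf-cbf-qp-cost} is $\tfrac12\mathbf{u}^T H(\mathbf{x})\mathbf{u}+p\delta^2+p_\omega(\omega-\omega_0)^2$ with $H(\mathbf{x})\succ0$ and $p,p_\omega>0$; in both cases the Hessian is positive definite, so the objective is strictly convex and coercive in all of its variables. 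Minimizing a coercive strictly convex function over a nonempty closed convex set attains its infimum at a unique point, which yields solvability with a unique solution. I expect the only point needing care to be the existence (not uniqueness) of the minimizer when $\mathcal{U}_{adm}(\mathbf{x})$ is unbounded: here coercivity of the quadratic objective in all variables — including the newly added $\omega$ and the relaxation $\delta$ — is what guarantees attainment without assuming $\mathcal{U}_{adm}(\mathbf{x})$ is a bounded polytope, consistent with Remark \ref{remakr:set-property-bounded-unbounded}. I would also state explicitly the implicit hypotheses that $\mathcal{U}_{adm}(\mathbf{x})$ is nonempty and closed, since nonemptiness underlies (a) and closedness underlies the attainment argument in (b).
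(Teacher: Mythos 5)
Your proposal is correct and follows essentially the same route as the paper's own proof: feasibility is obtained by letting the free variable $\omega$ drive $-\omega\alpha(h(\mathbf{x}))$ to $-\infty$ (valid precisely because $\alpha(h(\mathbf{x}))>0$ on $\text{Int}(\mathcal{C})$), and uniqueness follows from minimizing a strictly convex quadratic over the resulting nonempty convex feasible set. Your additional attention to closedness of $\mathcal{U}_{adm}(\mathbf{x})$ and coercivity of the objective to guarantee attainment on an unbounded feasible set is a welcome tightening of a step the paper leaves implicit, but it does not change the argument's structure.
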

\begin{proof}
For optimal-decay CBF-QP, since $\omega \in \mathbb{R}$ is a variable to optimize and $h(\mathbf{x}) > 0$, $-\omega \alpha(h(\mathbf{x}))$ could vary between $-\infty$ and $+\infty$ for any $\mathbf{x}$. Then, for any $\mathbf{x}$. there always exists a value $\omega$ such that the feasible region between \eqref{subeq:soft-cbf-qp-cbf-constraint} and \eqref{subeq:soft-cbf-qp-input-constraint} is non-empty. 

Moreover, we notice that the CBF constraints and input constraints are convex with respect to optimization variables $\mathbf{u}$ and $\omega$. Hence, the feasible region in the optimal-decay CBF-QP is convex and not empty. Additionally, our cost function \eqref{subeq:soft-cbf-qp-cost} is a quadratic form and positive-definite, which is strictly convex. Therefore, the minimization in \eqref{eq:soft-cbf-qp} is convex (convex cost and constraints) with non-empty feasible region, which is solvable and holds a unique solution \cite[Chap. 5]{boyd2004convex}.

Compared to optimal-decay CBF-QP, the optimal-decay CLF-CBF-QP is nothing different except there is a CLF constraint in \eqref{subeq:soft-clf-cbf-qp-clf-constraint}, which is also convex with respect to optimization variables and is optimized with the relaxation variable $\delta$. Therefore, the optimization in optimal CLF-CBF-QP is also convex with a non-empty feasible region, which is feasible with an unique solution.
\end{proof}
In fact, we could generalize the point-wise feasibility in Thm. \ref{thm:point-wise-feasibility} for any $\mathbf{x}$ when $h(\mathbf{x}) \neq 0$, as $\omega\alpha(h(\mathbf{x}))$ could vary between $-\infty$ and $\infty$ when $h(\mathbf{x}) \neq 0$ with $\omega \in \mathbb{R}$. However, this is not interesting since the system is already unsafe when $h(\mathbf{x}) < 0$.

\begin{remark}
The optimization variable $\omega$ is not required to be positive.
In fact, when the optimized value of $\omega$ is negative, this implies that the control barrier function is increasing with the optimized control input, correpsonding to the safety with set invariance.
In the scenario of $\omega$ as negative, the system is keeping further away from the obstacles.
\end{remark}
\subsection{Convergence of CBF constraint}
\label{subsec:convergence-hyperparameter}
The variable $\omega$ optimizes the convergence rate of the CBF constraint, with $\omega\alpha(.)$ still being a class $\mathcal{K}_{\infty}$ function. Therefore, with optimal-decay CBF-QP and CLF-CBF-QP, we equivalently use a state-dependent rate for the convergence of the CBF constraint, since the optimized value of $\omega$ will be calculated differently at different state $\mathbf{x}$. 

Moreover, smaller $\omega_0$ and larger $p_{\omega}$ would make the control barrier function decay slower. This makes sense from a mathematical perspective, since the larger $\omega_0$ optimizes the nominal CBF constraint with larger decay rate. Similarly, for smaller $p_{\omega}$, we will have optimized value of $\omega$ deviating more from the value $\omega_0$, which also brings a larger decay rate on the control barrier function.

\subsection{Persistent Feasibility and Safety}
\label{subsec:persistent-feasibility-safety}
Note that point-wise feasibility for all $\mathbf{x}$ with $h(\mathbf{x}) > 0$ does not guarantee that our system is persistently feasible, \textit{i.e.}, the system might become unsafe even if the system is initialized at $\mathbf{x}(0)=\mathbf{x}_0$ with $h(\mathbf{x}_0) > 0$.
This is due to the fact that when we reach $h(\mathbf{x})=0$, we can't change $\omega$ to make the optimization problem feasible while simultaneously enforcing the optimal-decay CBF constraint \eqref{subeq:soft-clf-cbf-qp-cbf-constraint} and the input constraint.
This makes the control barrier function invalid with respect to the safe set defined in \eqref{eq:safe-set} due to the input constraint.
In this case, the CBF-based control policy isn't safe, i.e., it is not forward complete, see \cite[Def. 1]{ames2019control}.
However, given any invariant set that is a subset of the safe set $\mathcal{C}$, our optimal-decay CBF-QP/CLF-CBF-QP will be persistently feasible if the initial state lies inside this safe invariant set.

The maximum invariant set for an optimal-decay controller could be different with respect to hyperparameters, such as $p_{\omega}$ and $\omega_0$, which will be illustrated in Sec. \ref{subsec:result-safety}.
Notice that solving an explicit representation of this control invariant set for a nonlinear system is usually intractable \cite{rakovic2010parameterized, athanasopoulos2014construction, rungger2017computing} and this exceeds the scope of this paper.
We will discuss the invariant safety and related performance in future work.

\begin{figure*}
    \centering
    \begin{subfigure}[t]{0.24\linewidth}
        \centering
        \includegraphics[width=0.99\linewidth]{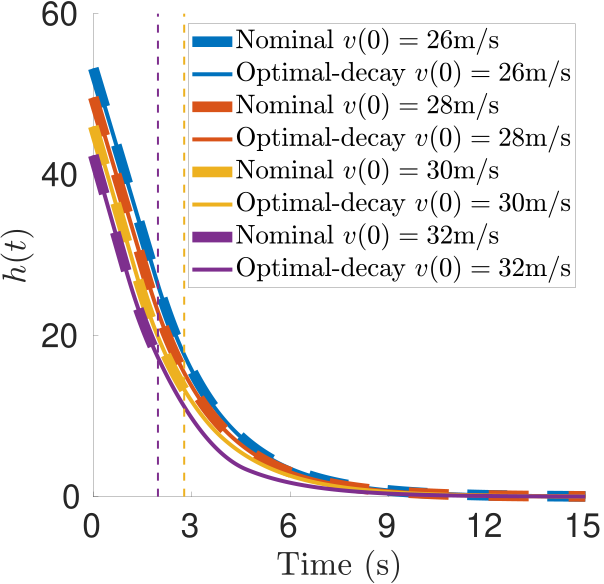}
        \caption{Control barrier function}
        \label{subfig:benchmark-feasibility-cbf}
    \end{subfigure}
    \begin{subfigure}[t]{0.24\linewidth}
        \centering
        \includegraphics[width=0.99\linewidth]{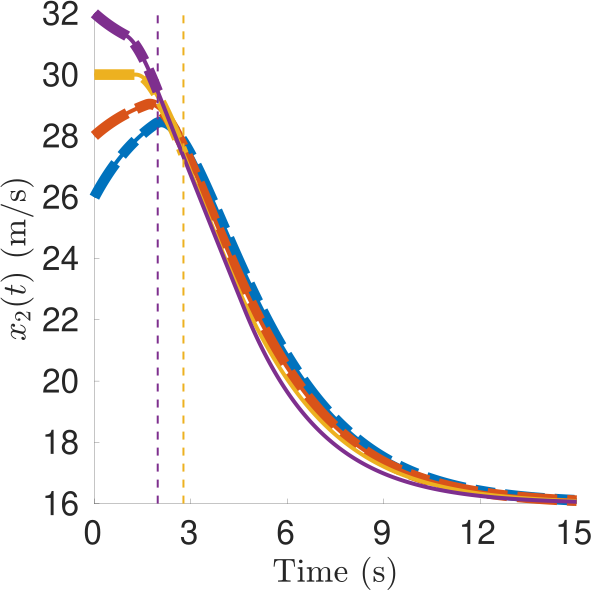}
        \caption{Ego vehicle's speed}
        \label{subfig:benchmark-feasibility-velocity}
    \end{subfigure}
    \begin{subfigure}[t]{0.24\linewidth}
        \centering
        \includegraphics[width=0.99\linewidth]{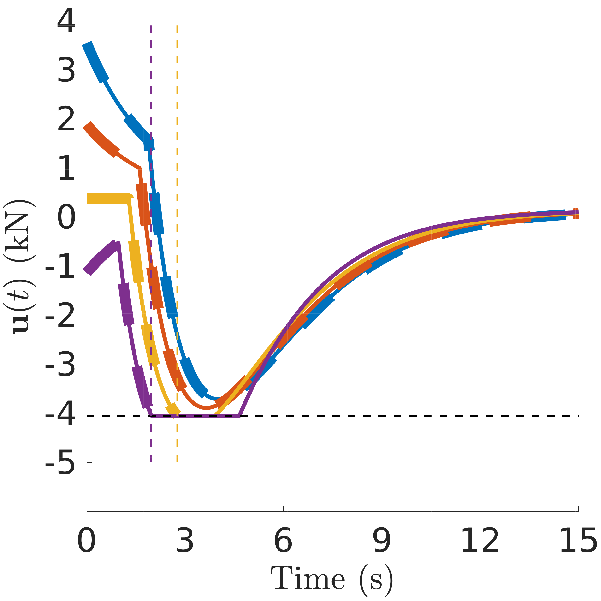}
        \caption{Optimal control input}
        \label{subfig:benchmark-feasibility-control-input}
    \end{subfigure}
    \begin{subfigure}[t]{0.24\linewidth}
        \centering
        \includegraphics[width=0.99\linewidth]{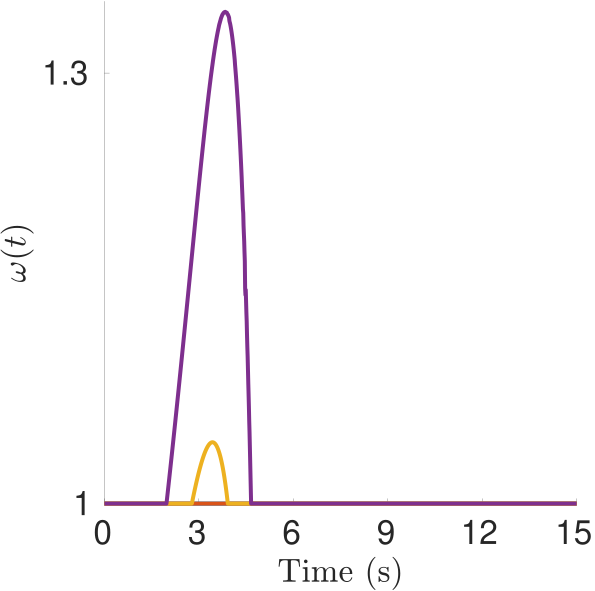}
        \caption{Optimal-decay variable}
        \label{subfig:benchmark-feasibility-omega}
    \end{subfigure}
    \caption{Simulation results of adaptive cruise control using original/optimal-decay form of CLF-CBF-QP with different initial velocities.
    The solid and dashed lines correspond to simulation with optimal-decay CLF-CBF-QP and original CLF-CBF-QP, respectively. Different colors represent different initial conditions. The colorful vertical dashed lines in Fig. \ref{subfig:benchmark-feasibility-velocity} represents when the original CLF-CBF-QP becomes infeasible and we can clearly see that the original CLF-CBF-QP becomes infeasible for purple ($x_2(0) = 32\text{m/s}$) and yellow ($x_2(0) = 30\text{m/s}$) initial conditions.
    }
    \label{fig:benchmark-different-initial-conditions}
\end{figure*}

\begin{remark}
\label{rem:p-omega-infinity}
When $p_\omega$ tends to $+\infty$ in \eqref{eq:soft-cbf-qp} and \eqref{eq:soft-clf-cbf-qp}, then given a state $\mathbf{x}$, if the original CBF constraint is satisfied, \textit{i.e.},
\begin{equation*}
    \exists \mathbf{u} \in \mathcal{U}_{adm}(\mathbf{x}), \ L_f h(\mathbf{x}) + L_g h(\mathbf{x}) \mathbf{u} \geq - \alpha(h(\mathbf{x})),
\end{equation*}
we will have the optimized value of $\omega$ as $\omega^*(\mathbf{x}) = \omega_0$. Specifically, when $\omega_0 = 1$, it will make the optimal-decay CBF-QP and CLF-CBF-QP as the original CBF-QP/CLF-CBF-QP. However, when the original CBF-QP and CLF-CBF-QP become infeasible, \textit{i.e.},
\begin{equation*}
    \forall \mathbf{u} \in \mathcal{U}_{adm}(\mathbf{x}), \ L_f h(\mathbf{x}) + L_g h(\mathbf{x}) \mathbf{u} < - \alpha(h(\mathbf{x})),
\end{equation*}
our optimal-decay form is still feasible and the optimal value of $\omega$ is as follows
\begin{equation*}
    \omega^*(\mathbf{x}) = \sup_{\mathbf{u} \in \mathcal{U}_{adm}} \dfrac{L_f h(\mathbf{x}) + L_g h(\mathbf{x}) \mathbf{u}}{-\alpha(h(\mathbf{x}))}.
\end{equation*}
To sum up, when $p_{\omega} = + \infty$ and $\omega_0 = 1$, the optimal-decay CBF-QP and optimal-decay CLF-CBF-QP are equivalent to original CBF-QP and CLF-CBF-QP, if the original ones are already feasible.
When the original constraint is infeasible, the optimization variable $\omega$ makes the proposed one still feasible.
\end{remark}

\section{Case Study: Adaptive Cruise Control}
\label{sec:example}
Having presented our optimal-decay CBF-QP/CLF-CBF-QP formulation, we proceed to validate the proposed strategy using an adaptive cruise control (ACC) example, which has been commonly used to validate safety-critical control strategies \cite{ames2014control, xiao2019control}.

\subsection{Simulation Setup}
\label{subsec:result-feasibility}
Consider a point-mass model of an ego vehicle moving along a straight line to follow a lead vehicle. The dynamics are given as follows
\begin{equation}
    \dot{\mathbf{x}} = \begin{bmatrix} x_2 \\ -\dfrac{1}{m} F_r(\mathbf{x}) \\ v_l - x_2
    \end{bmatrix} + \begin{bmatrix} 0 \\ \dfrac{1}{m} \\ 0
    \end{bmatrix} \mathbf{u},
\end{equation}
where $(x_1, x_2)$ are the position and velocity ($x_2 = \dot{x}_1$) of the ego vehicle, $m$ is the mass of the vehicle, and $x_3$ is the distance between the vehicle and the lead vehicle traveling at a velocity of $v_l$. $F_r$ represent the aerodynamic drag and we use the empirical form given as below
\begin{equation}
    F_r = f_0 + f_1 x_2 + f_2 x_2^2,
\end{equation}
where $f_0$, $f_1$, and $f_2$ are empirical constants.

A CLF-CBF-QP controller is designed to solve this ACC problem. In order to regulate speed, we pick a control Lyapunov function as follows
\begin{equation}
    V = (x_2 - v_d)^2. \label{eq:acc-clf}
\end{equation}

A control barrier function is used to guarantee that the ego vehicle will not collide with the leading vehicle. We consider a control barrier function as 
\begin{equation}
    h = x_3 - 1.8x_2, \label{eq:acc-cbf}
\end{equation}
where the factor of 1.8 is a result of converting to SI units.
Besides speed regulation and safe distance maintenance, we also consider input constraints as follows
\begin{equation}
    c_d mg \leq \mathbf{u} \leq c_a mg,
\end{equation}
where $c_a$ and $c_d$ are coefficients for maximum and minimum wheel forces.

Using the control Lyapunov function \eqref{eq:acc-clf} and the control barrier function \eqref{eq:acc-cbf}, we design an original CLF-CBF-QP controller and an optimal-decay CLF-CBF-QP controller with formulation in \eqref{eq:clf-cbf-qp} and \eqref{eq:soft-clf-cbf-qp}, respectively. The system is simulated with the two controllers with a frequency at 100Hz, and the values of parameters in the simulations are shown in Tab. \ref{tab:parameter-values-acc-simulation}. Notice that in the simulations, we consider $\alpha(.)$ and $\gamma(.)$ in \eqref{eq:clf-cbf-qp} and \eqref{eq:soft-clf-cbf-qp} as linear functions with constant coefficients $\alpha$ and $\gamma$.

\subsection{Point-wise Feasibility}
To compare the performance between our optimal-decay CLF-CBF-QP and the original form, we simulate the system with different initial conditions and illustrate them together in Fig. \ref{fig:benchmark-different-initial-conditions}, where solid and dashed lines represents the results from our optimal-decay CLF-CBF-QP and the original form, respectively. For our optimal-decay form, we firstly pick the hyperparameters $\omega_0$ and $p_{\omega}$ as 1 and $10^{8}$, which corresponds to the choice we discussed in Sec. \ref{subsec:convergence-hyperparameter}.

Precisely, the ego vehicle is initialized at origin $x_1(0) = 0$ m, the initial distance between the ego car and the leading one is set as $x_3(0) = 100$ m and initial speed $x_2(0)$ ranges from 26 m/s to 32 m/s, shown in Fig. \ref{subfig:benchmark-feasibility-velocity}. We can see that CLF-CBF-QP with initial speed as 30 m/s and 32 m/s will become infeasible during the simulations. This infeasibility comes from the conflict between input constraint and original CBF constraint and we can see clearly that the CLF-CBF-QP fails with input saturation and our optimal-decay form handles the problem properly, shown in Fig. \ref{subfig:benchmark-feasibility-control-input}.
The simulation with initial speed as 26 m/s and 28 m/s does not encounter any infeasible issues where the CLF-CBF-QP and our optimal-decay form shared the same control performance. Moreover, we notice $\omega$ becomes not equal to one only when there is a conflict between the input constraint and the original CBF constraint, shown in Fig. \ref{subfig:benchmark-feasibility-omega}.
The evolution of control barrier function for two controllers are shown in Fig. \ref{subfig:benchmark-feasibility-cbf}.
To sum up, the optimal-decay CLF-CBF-QP is equivalent to the original one when $\omega_0 = 1$ if the original one is already feasible, while still being as a feasible problem even when the original CLF-CBF-QP is infeasible.

\begin{figure}
    \centering
    \begin{subfigure}[t]{0.49\linewidth}
        \centering
        \includegraphics[width=0.99\linewidth]{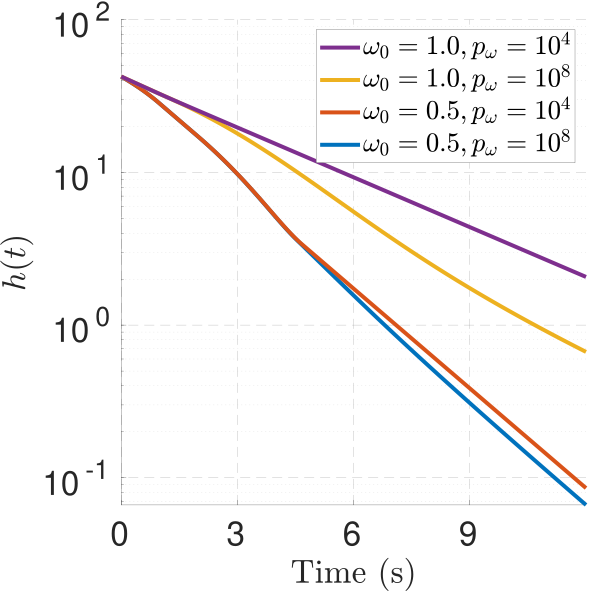}
        \caption{Different decay rates}
        \label{subfig:benchmark-hyperparameter-cbf}
    \end{subfigure}
    \begin{subfigure}[t]{0.49\linewidth}
        \centering
        \includegraphics[width=0.99\linewidth]{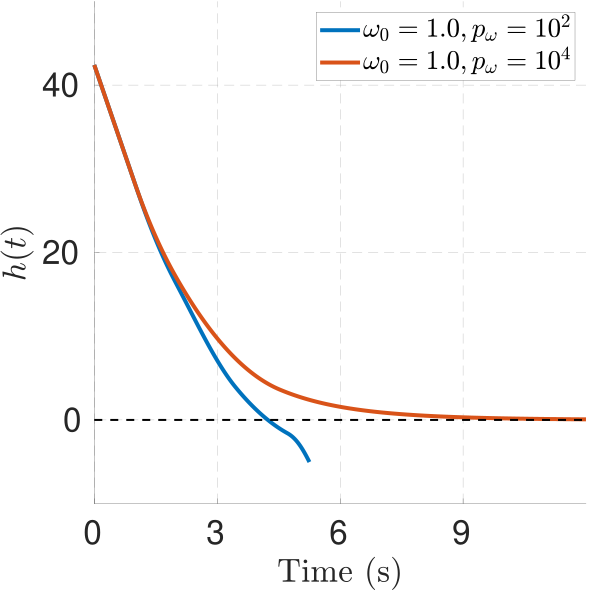}
        \caption{Different safety performances}
        \label{subfig:benchmark-safety-cbf}
    \end{subfigure}
    \caption{Illustration of the influence of hyperparameters $\omega_0$ and $p_{\omega}$ on CBF decay rates and system safety performance. The left plot is in log scale on the y-axis to show the difference in the decay rate.}
    \label{fig:benchmark-different-hyperparameters}
\end{figure}

\subsection{Convergence rate with hyperparameters}
\label{subsec:result-convergence}
As we have stated in Sec. \ref{subsec:persistent-feasibility-safety}, the system might become unsafe if it is initialized outside its invariant set.
To illustrate this property, we simulate the controller from an initial speed $x_2(0) = 32$ m/s and leading distance $x_3(0) = 100$ m with different values of hyperparameters $\omega_0$ and $p_{\omega}$.
We keep the same Lyapunov function and control barrier function with the same parameters in Table. \ref{tab:parameter-values-acc-simulation} and the simulation results are shown in Fig. \ref{fig:benchmark-different-hyperparameters}.
In Fig. \ref{subfig:benchmark-hyperparameter-cbf}, we can observe that smaller $\omega_0$ and larger $p_{\omega}$ would make the control barrier function decay slower, which verified exactly what was stated in Sec. \ref{subsec:convergence-hyperparameter}.

\subsection{Safety with hyperparameters}
\label{subsec:result-safety}
The simulations in Sec. \ref{subsec:result-feasibility} and \ref{subsec:result-convergence} that we have seen are all safe with set invariance on $\mathcal{C}$ defined in \eqref{eq:safe-set}.
However, for the same initial condition $x_2(0) = 32$ m/s, if we decrease $p_{\omega}$, the system could be unsafe after a while.  Specifically, we show that when $\omega_0 = 1, p_{\omega} = 10^2$, our system actually becomes unsafe after around 4s.
This happens since the initial condition no longer lies inside the invariant set for the optimal-decay CLF-CBF-QP with this configuration of hyperparameters, which was previously discussed in Sec. \ref{subsec:persistent-feasibility-safety}.
It must be noted that while a formulation guarantees point-wise feasibility for $h(\mathbf{x}) > 0$, this does not guarantee persistent feasibility for $h(\mathbf{x}) > 0$ in a sufficiently long-time trajectory.
This is seen in Fig. \ref{subfig:benchmark-safety-cbf}, where point-wise feasibility for all states with $h(\mathbf{x}) > 0$ still leads to the system unsafe after a while.

\begin{table} 
    \centering
    \begin{tabular}{|c c|c c|c c|}
    \hline
        \multicolumn{6}{|c|}{System Parameters} \\ \hline
        $m$ & 1650 kg & $v_l$ & 16 m/s & $v_d$ & 30 m/s\\
        $f_0$ & 0.1 N & $f_1$ & 5 Ns/m & $f_2$ & 0.25 $\text{N}\text{s}^2/\text{m}^2$ \\
        $c_d$ & -0.25 & $c_a$ & 0.25 & $g$ & 9.81 m/$\text{s}^2$ \\
        \hline
        \multicolumn{6}{|c|}{Controller Parameters} \\ \hline
        $\gamma$ & 1.0 & $p$ & 1.0 & $\alpha$ & 0.5 \\ \hline
    \end{tabular}
    \caption{Parameter values used in simulation.}
    \label{tab:parameter-values-acc-simulation}
\end{table}


\section{Conclusion}
\label{sec:conclusion}
In this paper, we have analyzed the point-wise feasibility problem of safety critical control from the perspective of input-space and state-space.
Then, in order to deal with the conflict between CBF constraint and input constraint, we proposed an optimal-decay form for CBF-QP and CLF-CBF-QP to guarantee point-wise feasibility inside the safe set.
We numerically verified this control design in an ACC example.

\bibliographystyle{IEEEtran}
\balance
\bibliography{references}{}
\end{document}